\theoremstyle{plain}
\newtheorem{proposition}[theorem]{Proposition}
\newtheorem{claim}[theorem]{Claim}
\newcommand{\NN}{\mathbb{N}}
\newcommand{\ZZ}{\mathbb{Z}}
\newcommand{\QQ}{\mathbb{Q}}
\newcommand{\RR}{\mathbb{R}}
\newcommand{\ACA}{\mathcal{A}} 
\newcommand{\ACN}{\mathcal{N}} 
\newcommand{\ACQ}{\mathcal{Q}} 
\newcommand{\ACC}{\mathfrak{C}} 
\newcommand{\RT}{\operatorname{RT}} 
\newcommand{\CH}{\operatorname{CH}} 
\begin{document}

\title{A Linear Acceleration Theorem for 2D Cellular Automata on all Complete Neighborhoods}
\titlerunning{Linear Acceleation on 2DCA}

\author[1]{Anaël Grandjean}
\author[2]{Victor Poupet}
\affil[1]{LIRMM, Université Montpellier\\
161 rue Ada, 34392 Montpellier, France\\
\url{anael.grandjean@lirmm.fr}}
\affil[2]{LIRMM, Université Montpellier\\
161 rue Ada, 34392 Montpellier, France\\
\url{victor.poupet@lirmm.fr}}
\authorrunning{A.\,Grandjean and V.\,Poupet}
\Copyright{Anaël Grandjean and Victor Poupet}
\subjclass{F.1.1 Models of Computation}
\keywords{2D Cellular automata, linear acceleration, language recognition.}

\EventEditors{Ioannis Chatzigiannakis, Michael Mitzenmacher, Yuval
Rabani, and Davide Sangiorgi}
\EventNoEds{4}
\EventLongTitle{43rd International Colloquium on Automata, Languages,
and Programming (ICALP 2016)}
\EventShortTitle{ICALP 2016}
\EventAcronym{ICALP}
\EventYear{2016}
\EventDate{July 11--15, 2016}
\EventLocation{Rome, Italy}
\EventLogo{}
\SeriesVolume{55}
\ArticleNo{XXX}

\maketitle

\abstract{Linear acceleration theorems are known for most computational models. Although such results have been proved for two-dimensional cellular automata working on specific neighborhoods, no general construction was known. We present here a technique of linear acceleration for all two-dimensional languages recognized by cellular automata working on complete neighborhoods.}

\section{Introduction}

Cellular automata (CA) were initially introduced by S.~Ulam and J.~von Neumann \cite{neumann66} in the 1960s to study self-reproduction in discrete dynamical systems. They are massively parallel systems consisting of an infinite array of cells. Cells evolve synchronously depending on the states of their neighbors according to a uniform deterministic rule. Although initially considered in two dimensions, the definition can be adapted to any dimensional cellular space and even more general uniform graphs \cite{roka99}.

Soon after their introduction, they were shown to be computationally universal \cite{smithIII71, albert87}. As a computation model, they have been extensively studied as one-dimensional language recognizers \cite{smithIII72, cole69} but are also very well suited to the study of two-dimensional ``picture languages'' \cite{smithIII71a, terrier99, terrier04}.

The neighborhood of a cellular automaton defines the underlying communications graph of the cells. Although most of the existing work on two-dimensional cellular automata focuses on the von Neumann (4 closest neighbors) and Moore (8 closest neighbors) neighborhoods, understanding how the choice of the neighborhood affects the algorithmic capabilities of the model is a key to understanding parallel computation.


Linear acceleration theorems are well known for most of the commonly considered computation models. It was first proved for one-dimensional cellular automata working on the standard neighborhood and two-dimensional cellular automata on von Neumann's neighborhood by W.~T.~Beyer \cite{beyer69}, inspired by similar constructions for sequential input cellular automata \cite{cole69, fischer65, hennie61}. The one-dimensional case was later generalized by J.~Mazoyer and N.~Reimen for arbitrary neighborhoods \cite{mazoyer92}. As for two-dimensional neighborhoods, V.~Terrier extended the construction to the Moore neighborhood \cite{terrier99} and then to the slightly more general class of neighborhoods whose convex hull has at most one vertex in the positive quarter plane \cite{terrier04}.

In this paper, we prove a general linear acceleration result for all complete neighborhoods on two-dimensional cellular automata.

The main theorem is stated in Section~\ref{sec:main} and proved in Sections~\ref{sec:compression} to \ref{sec:total_time}. Sections~\ref{sec:compression} and \ref{sec:simulation} describe the two main elements of the construction (compression of the input and accelerated simulation of the original automaton respectively). Section \ref{sec:transition} presents a technique to perform a sequence of tasks on a cellular automaton without the need for synchronization at the start of each new task, used in the proof of the theorem to combine sections 4 and 6. Although this technique is elementary and has been used in previous publications (a special case was used by W.~T.~Beyer in 1969 \cite{beyer69}), reviews of previous articles seem to indicate that it is not common knowledge. It is therefore presented here in a separate section and stated generally in the hopes that it can be easily reused in future publications.

The construction presented in this article is similar in several ways to previously published constructions, most notably those of V.~Terrier in \cite{terrier04}. Significant improvements include compression of the input in \emph{almost} optimal time (Section~\ref{sec:compression}, specifically Subsection~\ref{ssec:compression_general}) and a more general simulation technique (Section 6).

\section{Definitions}

\subsection{Cellular Automata}

\begin{definition}[Cellular Automaton]
	A \emph{cellular automaton} (CA) is a quadruple $\ACA = (d, \ACQ, \ACN, \delta)$ where
	\begin{itemize}
		\item $d\in\NN$ is the dimension of the automaton~;
		\item $\ACQ$ is a finite set whose elements are called \emph{states}~;
		\item $\ACN$ is a finite subset of $\ZZ^d$ called \emph{neighborhood} of the automaton~;
		\item $\delta: \ACQ^{\ACN} \rightarrow \ACQ$ is the \emph{local transition function} of the automaton.
	\end{itemize}
\end{definition}

\begin{definition}[Configuration]
	A \emph{$d$-dimensional configuration} $\ACC$ over the set of states $\ACQ$ is a mapping from $\ZZ^d$ to $\ACQ$. The elements of $\ZZ^d$ will be referred to as \emph{cells}.
\end{definition}

Given a CA $\ACA = (d, \ACQ, \ACN, \delta)$, a configuration $\ACC\in\ACQ^{\ZZ^d}$ and a cell $c\in \ZZ^d$, we denote by $\ACN_\ACC(c)$ the neighborhood of $c$ in $\ACC$~:
\begin{displaymath}
	\ACN_\ACC(c) : \left\{\begin{array}{rcl}
		\ACN & \rightarrow & \ACQ \\
		n & \mapsto & \ACC(c+n)
	\end{array}\right.
\end{displaymath}

From the local transition function $\delta$ of a CA $\ACA = (d, \ACQ, \ACN, \delta)$, we can define the \emph{global transition function of the automaton} $\Delta: \ACQ^{\ZZ^d} \rightarrow \ACQ^{\ZZ^d}$ obtained by applying the local rule on all cells~:
\begin{displaymath}
	\Delta(\ACC) = \left\{\begin{array}{rcl}
		\ZZ^d & \rightarrow & \ACQ \\
		c & \mapsto & \delta(\ACN_\ACC(c))
	\end{array}\right.
\end{displaymath}
The action of the global transition rule makes $\ACA$ a dynamical system over the set $\ACQ^{\ZZ^d}$. Because of this dynamic, in the following we will identify the CA $\ACA$ with its global rule so that $\ACA(\ACC)$ is the image of a configuration $\ACC$ by the action of the CA $\ACA$, and more generally $\ACA^t(\ACC)$ is the configuration resulting from applying $t$ times the global rule of the automaton from the initial configuration $\ACC$.

\begin{definition}[Quiescent and Permanent States]
	For a given CA $\ACA$, we say that a state $q$ is \emph{quiescent} if a cell in state $q$ remains in this state if all its neighbors are also in $q$. We say that $q$ is \emph{permanent} if a cell in state $q$ remains in that state regardless of the state of its neighbors.
\end{definition}

In this article we will only consider 2-dimensional cellular automata (2DCA). From now on the set of cells will always be $\ZZ^2$.

\subsection{Neighborhoods}

Throughout the article, we use the additive notation for vector sums, the power notation for neighborhood composition and the product notation for scalar product:
\begin{definition}[Vector Sum]
	Given two neighborhoods $\ACN_1$ and $\ACN_2$ and a cell $c\in\ZZ^2$, we define the vector sums $\ACN_1+\ACN_2=\{x+y \mid x\in\ACN_1, y\in \ACN_2\}$ and $c+\ACN_1 = \{c+x \mid x \in \ACN_1\}$.
\end{definition}
\begin{definition}[Neighborhood Powers]
	Given a neighborhood $\ACN$, we define
	\begin{align}
		\ACN^0 &= \{0\} \\
		\forall k\in\NN,\quad \ACN^{k+1} &= \ACN + \ACN^k
	\end{align}
\end{definition}
\begin{definition}[Scalar product]
	Given a neighborhood $\ACN$ and an integer $k\in\ZZ$, we define the scalar product $k\ACN = \{kx \mid x\in\ACN\}$.
\end{definition}

\begin{definition}[Complete Neighborhood]
	A neighborhood $\ACN$ is said to be \emph{complete} if
	\[
		\bigcup_{k\in\NN}\ACN^k = \ZZ^2
	\]
\end{definition}

\begin{definition}[Convex Hull and Convex Neighborhood]
	The \emph{convex hull} of a neighborhood $\ACN$ is the smallest convex polygon $\CH(\ACN)\subset \RR^2$ such that $\ACN\subseteq \CH(\ACN)$. Moreover a neighborhood $\ACN$ is said to be \emph{convex} if it contains all points of integer coordinates in its convex hull: $\ACN = \CH(\ACN) \cap \ZZ^2$.
\end{definition}
\begin{remark}
	If $\ACN$ is a convex neighborhood, $\ACN^p$ is also convex for any $p\in\NN$.
\end{remark}

\subsection{Two-Dimensional Language Recognition}

\begin{definition}[Picture]
	For $n, m\in \NN$ and $\Sigma$ a finite alphabet, an \emph{$(n, m)$-picture} (picture of width $n$ and height $m$) over $\Sigma$ is a mapping
	\begin{displaymath}
		p: \llbracket 0, n-1\rrbracket \times \llbracket 0, m-1\rrbracket \rightarrow \Sigma
	\end{displaymath}

	$\Sigma^{n, m}$ denotes the set of all $(n, m)$-pictures over $\Sigma$ and $\Sigma^{*,*} = \bigcup_{n, m\in \NN} \Sigma^{n, m}$ the set of all pictures over $\Sigma$. A \emph{picture language} over $\Sigma$ is a set of pictures over $\Sigma$.
\end{definition}

\begin{definition}[Picture Configuration]
	Given an $(n, m)$-picture $p$ over $\Sigma$, we define the \emph{picture configuration} associated to $p$ with quiescent state $q_0\notin \Sigma$ as
	\begin{displaymath}
		\ACC_{p, q_0}: \left\{\begin{array}{rcl}
			\ZZ^2 & \rightarrow & \Sigma \cup \{q_0\} \\
			x, y & \mapsto & \left\{\begin{array}{rl}
				p(x, y) & \qquad\textrm{if $(x, y)\in \llbracket 0, n-1\rrbracket \times \llbracket 0, m-1\rrbracket$} \\
				q_0 & \qquad\textrm{otherwise}
			\end{array}\right.
		\end{array}\right.
	\end{displaymath}
\end{definition}

\begin{definition}[Picture Recognizer]
	Given a picture language $L$ over an alphabet $\Sigma$, we say that a 2DCA $\ACA=(2, \ACQ, \ACN, \delta)$ such that $\Sigma \subseteq \ACQ$ recognizes $L$ with quiescent state $q_0\in \ACQ\setminus \Sigma$, accepting state $q_a \in \ACQ$ and rejecting state $q_r\in \ACQ$ in time $\tau: \NN^2 \rightarrow \NN$ if $q_a$ and $q_r$ are permanent states and for any picture $p$ (of size $n\times m$), starting from the picture configuration $\ACC_{p,q_0}$ at time 0, the origin cell of the automaton at time $\tau(n, m)$ is in state $q_a$ if $p\in L$ and state $q_r$ if $p\notin L$.
\end{definition}

\begin{definition}[Real Time]
	Given a complete neighborhood $\ACN$, the real time function $\RT_\ACN: \NN^2 \rightarrow \NN$ associated to $\ACN$ is defined as
\begin{displaymath}
	\RT_\ACN(n, m) = \min\{t \mid \llbracket 0, n-1\rrbracket \times \llbracket 0, m-1\rrbracket \subseteq \ACN^t\}
\end{displaymath}
\end{definition}

\section{The Main Theorem}
\label{sec:main}

Most of the article will be dedicated to the proof of the following theorem

\begin{theorem}[Linear Acceleration]
	\label{theo:main}
	For any complete neighborhood $\ACN$, any real number $\epsilon>0$, any finite alphabet $\Sigma$ and any language $L \subseteq \Sigma^{*,*}$, if $L$ is recognized by a 2DCA working on $\ACN$ in time
	\begin{displaymath}
		(n, m) \mapsto \RT_\ACN(n, m) + f(n, m)
	\end{displaymath}
	for some function $f: \NN^2 \rightarrow \NN$ then $L$ can be recognized in time
	\begin{displaymath}
		(n, m) \mapsto \left\lceil(1+\epsilon)\RT_\ACN(n, m) + \epsilon f(n, m)\right\rceil
	\end{displaymath}
	by a 2DCA with neighborhood $\ACN$.
\end{theorem}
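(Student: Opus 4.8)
Let $\ACA$ be a 2DCA working on $\ACN$ that recognises $L$ in time $(n,m)\mapsto\RT_\ACN(n,m)+f(n,m)$. The plan is to build a 2DCA $\ACA'$, again working on $\ACN$, that combines a \emph{compression} of the input with an \emph{accelerated simulation} of $\ACA$, the handover between the two phases being carried out by the transition technique of Section~\ref{sec:transition}. Fix a constant $k\in\NN$, to be chosen large as a function of $\epsilon$ and $\ACN$ only. First, $\ACA'$ reorganises the $n\times m$ input into a configuration of roughly $\lceil n/k\rceil\times\lceil m/k\rceil$ cells in which the cell at position $(i,j)$ holds the $k\times k$ block of input cells $\llbracket ki,ki+k-1\rrbracket\times\llbracket kj,kj+k-1\rrbracket$ (clipped at the picture border, missing entries set to $q_0$). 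Then, on this compressed configuration, $\ACA'$ simulates $\ACA$, performing $\Theta(k)$ steps of $\ACA$ per step (or per bounded number of steps) of $\ACA'$. As the origin of $\ACA'$ carries the block containing the origin of $\ACA$, it can output $\ACA$'s verdict as soon as the simulation reaches step $\RT_\ACN(n,m)+f(n,m)$ of $\ACA$.

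Everything rests on an elementary geometric fact about iterated neighbourhoods. Completeness of $\ACN$ forces $0$ into the interior of $\CH(\ACN)$ (otherwise all the $\ACN^{t}$ would lie in a common half-plane through $0$), so the gauge $\gamma$ of $\CH(\ACN)$ is a genuine (asymmetric) norm on $\RR^{2}$, comparable to the Euclidean one, and there is a constant $c_\ACN$ with $\gamma(x)\le\min\{t\mid x\in\ACN^{t}\}\le\gamma(x)+c_\ACN$ for every $x\in\ZZ^{2}$; the upper bound is the only nontrivial point, and expresses that iterating a convex neighbourhood fills each scaled hull up to a bounded thickness. It follows that $\RT_\ACN(n,m)$ equals, up to an additive constant, the largest value of $\gamma$ on the four corners of the picture, and that $\RT_\ACN(n,m)=\Theta(\max(n,m))$. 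For the compression (Section~\ref{sec:compression}) this gives an \emph{almost optimal} schedule: the datum initially at input cell $c$ must reach cell $\lfloor c/k\rfloor$, a displacement of about $(1-1/k)c$, which already costs roughly $(1-1/k)\gamma(c)$ steps, so no cell of the compressed configuration can be ready earlier than about $(1-1/k)\RT_\ACN(n,m)$; pipelining the $O(k^{2})$ pieces of each block along $\ACN$-paths attains this up to an additive $O(k^{2})$, so the whole compressed configuration is complete by time $(1-1/k)\RT_\ACN(n,m)+O(k^{2})$.

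For the accelerated simulation (Section~\ref{sec:simulation}), the same fact shows that the content of the block at $(i,j)$ after $\Theta(k)$ further steps of $\ACA$ depends only on the current contents of the blocks within $\ACN^{c}$ of $(i,j)$, for a constant $c$ depending only on $\ACN$ (since $\ACN^{\Theta(k)}$ has diameter $O(k)$ and hence overlaps only boundedly many $k\times k$ blocks); one such super-step is therefore carried out in $O(1)$ steps of $\ACA'$, so for $k$ large enough the effective simulation speed exceeds $1/\epsilon$. The point of gluing the two phases by the technique of Section~\ref{sec:transition} is to keep them logically separate, and this is the crux of the timing analysis (Section~\ref{sec:total_time}): the compression, together with the short synchronisation of the compressed configuration that the accelerated phase needs in order to run at full speed, costs $(1-1/k)\RT_\ACN(n,m)+O(\RT_\ACN(n,m)/k)+O(k^{2})$ and depends on the picture size only, whereas the $f$-dependent part of the original computation is handled afterwards, in the accelerated phase, where simulating $\RT_\ACN(n,m)+f(n,m)$ steps of $\ACA$ costs at most $\epsilon\bigl(\RT_\ACN(n,m)+f(n,m)\bigr)+O(1)$ steps of $\ACA'$. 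Summing these two contributions and choosing $k$ large enough --- formally, by applying the construction with $\epsilon/C$ in place of $\epsilon$ for a suitable constant $C$ depending only on $\ACN$ --- makes every term of the form $O(\RT_\ACN(n,m)/k)$, $O(f(n,m)/k)$ and $O(k^{2})$ fit inside the remaining budget for all $(n,m)$ outside a finite set; since $\RT_\ACN(n,m)=\Theta(\max(n,m))$ that set is genuinely finite, and its pictures are dealt with by a hard-coded table (the origin of $\ACA'$ determines the whole small picture already by real time $\RT_\ACN(n,m)\le\lceil(1+\epsilon)\RT_\ACN(n,m)+\epsilon f(n,m)\rceil$). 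The total running time is then an integer that is at most $(1+\epsilon)\RT_\ACN(n,m)+\epsilon f(n,m)$, hence at most its ceiling.

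The step I expect to be the main obstacle is the \emph{scheduling}. Simply interleaving the two phases --- letting each cell start simulating as soon as its own block is assembled --- yields \emph{no} acceleration at all: the blocks far from the origin only finish compressing around time $(1-1/k)\RT_\ACN(n,m)$, because their contents must physically travel in from far-away input cells, and these are precisely the blocks on which the late simulation steps depend, so the computation again takes about $\RT_\ACN(n,m)+f(n,m)$ steps. The two phases must therefore be separated, which is exactly what the transition technique of Section~\ref{sec:transition} achieves without paying for a full, global resynchronisation. Together with designing the compression so that it runs in near-optimal time --- with a correct pipelining along $\ACN$-paths, whose geometry depends intricately on the shape of $\ACN$ --- this is where the real work lies; once these are in hand, the geometric estimates on $\ACN^{t}$ and the final timing accounting are routine.
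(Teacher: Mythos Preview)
Your high-level decomposition (compress, then simulate $k$ steps at a time, glued by the technique of Section~\ref{sec:transition}) matches the paper's, and your simulation argument is a legitimate variant of the paper's Lemma~\ref{lem:alpha_k}. But the compression bound you assert is the real gap. You claim that ``pipelining the $O(k^{2})$ pieces of each block along $\ACN$-paths'' finishes in time $(1-1/k)\RT_\ACN(n,m)+O(k^{2})$. This is precisely what the paper does \emph{not} manage to do, and its Conclusion explains why it is believed to be impossible in general: the optimal direction from a cell $c$ towards $\lfloor c/k\rfloor$ is determined by which face of $\CH(\ACN)$ supports $c$, and that depends on the input proportion $n/m$, which a cell in the interior cannot know before information from the axes reaches it. If $\CH(\ACN)$ has several vertices in the positive quadrant, any fixed choice of ``$\ACN$-path'' is sub-optimal for some proportion, by a factor bounded away from~$1$, so the loss is $\Theta(\RT_\ACN)$, not $O(1)$. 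The paper's actual mechanism (Section~\ref{ssec:compression_general}) is to run finitely many compressions in parallel, one for each of a fine mesh of proportions, and accept the resulting $(\tfrac{k-1}{k}+\epsilon)\RT_\ACN$ bound; this extra $\epsilon\,\RT_\ACN$ is exactly the ``$(1+\epsilon)$'' in the theorem and is not recovered later.

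Your account of the transition is also inverted. What Proposition~\ref{prop:synchronization} does \emph{is} ``let each cell start simulating as soon as its own block is assembled'', with the rule that a cell waits whenever a neighbour is behind; and this \emph{does} accelerate. The inductive statement is that at real time $t_{0}+t$ every cell has simulated time at least $t$, where $t_{0}$ is the time the \emph{last} block is assembled; hence the origin reaches simulated step $\RT_\ACN+f$ by real time $t_{0}+(\RT_\ACN+f)/k$, not $\RT_\ACN+f$ as you assert. There is no ``short synchronisation'' costing $O(\RT_\ACN/k)$: passive synchronisation incurs zero overhead beyond $t_{0}$. So the paper's total is simply $(\tfrac{k-1}{k}+\epsilon)\RT_\ACN+\tfrac{1}{k}(\RT_\ACN+f)+O(1)=(1+\epsilon)\RT_\ACN+\tfrac{1}{k}f+O(1)$, after which Claim~\ref{cla:constant} removes the constant. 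Your final timing happens to land in the right place only because the fictitious $O(\RT_\ACN/k)$ synchronisation term accidentally stands in for the genuine $\epsilon\,\RT_\ACN$ compression loss you omitted.
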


\begin{corollary}
	For any complete neighborhood $\ACN$, any language recognized in time $(n, m) \mapsto k \RT_\ACN(n, m)$ for some $k > 1$ can be recognized in time $(n, m) \mapsto (1+\epsilon)\RT_\ACN(n, m)$ for any real number $\epsilon>0$.
\end{corollary}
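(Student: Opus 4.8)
The plan is to follow the standard two-phase strategy for linear acceleration, adapted to work over an arbitrary complete neighborhood $\ACN$. Fix an integer parameter $k$ large enough that $1/k < \epsilon$ (we will need something slightly stronger, of the form $(1 + c/k) \le 1 + \epsilon$ for a constant $c$ depending only on $\ACN$). The accelerated automaton $\ACA'$ will first \emph{compress} the input picture $p$ of size $n \times m$ so that the information originally spread over a grid is packed into cells whose coordinates are scaled down by a factor $k$ in each direction, i.e. the content of the $k \times k$ block around $(kx, ky)$ is stored in a single cell; then it will \emph{simulate} the original automaton $\ACA$ on this compressed representation, where each step of $\ACA'$ advances $\ACA$ by $k$ steps. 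The point is that $k$ steps of $\ACA$ on the original configuration correspond, after compression by a factor $k$, to one step of a cellular automaton working on the neighborhood $\lceil \ACN / k \rceil$ (the integer points that $\ACN^k$ reaches, rescaled), which for $k$ large is contained in $\ACN$ itself since $\ACN$ is complete and $0 \in \CH(\ACN)$ in a suitable sense — this is where the geometry of complete neighborhoods enters.

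Concretely, I would organize the construction as follows. First, invoke the compression result (Section~\ref{sec:compression}) to argue that there is a 2DCA on $\ACN$ that transforms $\ACC_{p,q_0}$ into a configuration holding the compressed picture, and that does so in time $\RT_\ACN(n,m)/k + O(\text{lower order})$ — "almost optimal" in the sense advertised in the introduction. Second, invoke the accelerated simulation (Section~\ref{sec:simulation}): from the compressed configuration, a 2DCA on $\ACN$ can simulate $t$ steps of $\ACA$ in roughly $t/k$ of its own steps, because distances have been divided by $k$ and the real-time cost of gathering a compressed neighborhood is correspondingly reduced; running the simulation until $\ACA$'s origin cell would have reached time $\RT_\ACN(n,m) + f(n,m)$ costs about $(\RT_\ACN(n,m) + f(n,m))/k$ additional steps. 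Summing the two phases gives total time $\approx (2/k)\RT_\ACN(n,m) + (1/k) f(n,m)$ plus lower-order terms, which we then need to massage into $\lceil (1+\epsilon)\RT_\ACN(n,m) + \epsilon f(n,m)\rceil$. The naive sum is far below the target, so there is comfortable slack; the real work is handling the overheads and the seams.

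The main obstacle, and the reason the argument is delicate, is combining the two phases without a global synchronization between them: the compression phase does not finish at the same instant on all cells, so the simulation phase must be allowed to start locally and asynchronously, region by region, as soon as compressed data becomes available. This is precisely what the transition technique of Section~\ref{sec:transition} is designed to handle, and I would apply it here to glue Section~\ref{sec:compression} onto Section~\ref{sec:simulation} so that the time bounds add up cleanly rather than forcing a wasteful "wait for the slowest cell" step. A second, more technical point is bounding the lower-order terms uniformly in $n$ and $m$: $\RT_\ACN$ for a general complete neighborhood is only known to be, up to constants, the natural "gauge" distance associated with $\CH(\ACN)$, and one must check that the rescaling-by-$k$ error in converting $\RT_\ACN(n,m)$ on the original scale to $\RT_\ACN$ on the compressed scale is $O(1)$ (or at worst $o(\RT_\ACN)$), so that it can be absorbed by the $\epsilon$-slack and the ceiling; choosing $k$ as a function of $\epsilon$ and $\ACN$ (not of $n,m$) is what makes this possible. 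Once these two issues are dispatched, the corollary is immediate: a language recognized in time $k'\,\RT_\ACN$ with $k' > 1$ satisfies the hypothesis of Theorem~\ref{theo:main} with $f = (k'-1)\RT_\ACN$, and the conclusion gives time $\lceil (1+\epsilon)\RT_\ACN + \epsilon (k'-1)\RT_\ACN\rceil$; applying the theorem with $\epsilon$ replaced by $\epsilon/k'$ (still positive) yields recognition in time $\le (1+\epsilon)\RT_\ACN + O(1)$, and iterating the theorem a bounded number of times removes the additive constant, giving exactly $(1+\epsilon)\RT_\ACN$.
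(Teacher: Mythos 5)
The last sentence of your proposal is the entire proof, and it is the paper's (implicit) proof of the corollary: a language recognized in time $k'\,\RT_\ACN$ satisfies the hypothesis of Theorem~\ref{theo:main} with $f=(k'-1)\RT_\ACN$, and applying the theorem with $\epsilon'=\epsilon/k'$ gives time $\lceil(1+\epsilon/k')\RT_\ACN+(\epsilon/k')(k'-1)\RT_\ACN\rceil=\lceil(1+\epsilon)\RT_\ACN\rceil$, which is the stated bound (recognition times are integer-valued, so the ceiling is all one can mean). With this choice of $\epsilon'$ no additive constant appears, so there is nothing to remove; and ``iterating the theorem'' would not remove one anyway --- the paper's mechanism for absorbing additive constants (Claim~\ref{cla:constant}) is to shrink $\epsilon$ slightly and hard-code the finitely many small inputs, not to iterate the construction.

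Everything before that last sentence is an unnecessary re-derivation of Theorem~\ref{theo:main}, and it contains a genuine error worth flagging even though it does not affect the corollary. The compression phase does not run in time $\RT_\ACN/k$: packing the input $k$-fold does not shorten the distance the information must physically travel, and the state at the far corner of the picture must still reach a cell roughly $\frac{k-1}{k}$ of the way to the origin, so compression costs $(\frac{k-1}{k}+\epsilon)\RT_\ACN+O(1)$ --- almost the full real time. The two phases therefore sum to $(\frac{k-1}{k}+\epsilon)\RT_\ACN+\frac{1}{k}(\RT_\ACN+f)+O(1)=(1+\epsilon)\RT_\ACN+\frac{f}{k}+O(1)$, which matches the theorem's bound essentially exactly; there is no ``comfortable slack,'' and your estimate $(2/k)\RT_\ACN+(1/k)f$ is off by roughly a factor of $k/2$ in the dominant term. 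Since Theorem~\ref{theo:main} is available as a black box, the corollary stands on your final substitution alone, but a construction designed around the $\RT_\ACN/k$ figure would fail.
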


To prove Theorem~\ref{theo:main}, we consider a 2DCA $\ACA$ working on a complete neighborhood $\ACN$ and describe the construction of a 2DCA $\ACA'$ working on the same neighborhood that simulates the behavior of $\ACA$ in a way that enables it to recognize the same language as $\ACA$ in a linearly shorter time.

\subsection{Preliminary Remarks}

The following observations will greatly simplify the proof of Theorem~\ref{theo:main}.

\begin{claim}
	\label{cla:constant}
	It is sufficient to prove Theorem~\ref{theo:main} up to an additive constant, meaning that we only need to prove that $L$ can be recognized in time
	\[
		(n, m) \mapsto (1+\epsilon)\RT_\ACN(n, m) + \epsilon f(n, m) + O(1)
	\]
\end{claim}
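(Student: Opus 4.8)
The plan is to show that the additive constant appearing in a "weaker" acceleration statement can be absorbed by running the accelerated automaton on a slightly larger effective input, or equivalently by observing that the target time function grows at least linearly in $\max(n,m)$, so that for all but finitely many picture sizes the $O(1)$ term is already dominated by the slack between the two bounds. Concretely, suppose we have proved that $L$ is recognized on $\ACN$ in time $(n,m)\mapsto (1+\epsilon)\RT_\ACN(n,m)+\epsilon f(n,m)+c$ for some constant $c$. We want to remove the $+c$ and land at the time of Theorem~\ref{theo:main}, namely $\lceil(1+\epsilon)\RT_\ACN(n,m)+\epsilon f(n,m)\rceil$.

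The first step is to note that it suffices to prove Theorem~\ref{theo:main} for $\epsilon$ arbitrarily small, since a recognizer meeting a time bound for a given $\epsilon$ also meets the (larger) time bound for any $\epsilon'>\epsilon$; so we may freely shrink $\epsilon$. The second step is to split $\epsilon$ as $\epsilon = \epsilon_1 + \epsilon_2$ with $\epsilon_1,\epsilon_2>0$, apply the weak statement with parameter $\epsilon_1$ to obtain a recognizer running in time $(1+\epsilon_1)\RT_\ACN(n,m)+\epsilon_1 f(n,m)+c$, and then argue that this quantity is at most $\lceil(1+\epsilon)\RT_\ACN(n,m)+\epsilon f(n,m)\rceil$ for all $(n,m)$ outside a finite set. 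Indeed $\RT_\ACN(n,m)$ tends to infinity as $n+m\to\infty$ (since $\ACN$ is finite, $\ACN^t$ contains only $O(t^2)$ cells, so covering the $n\times m$ rectangle forces $\RT_\ACN(n,m)\geq \sqrt{nm}$ roughly, and in particular $\RT_\ACN(n,m)\to\infty$); hence $(\epsilon_2)\RT_\ACN(n,m)\geq c+1$ for all but finitely many $(n,m)$, which gives the desired inequality on that cofinite set. The third step handles the finitely many remaining picture sizes: for each such fixed size the language restricted to that size is a finite set, so it can be recognized in constant time (in fact in time $\RT_\ACN(n,m)$, which already suffices to read the whole input) by a look-up hard-coded into the states, and these finitely many exceptional behaviors can be merged into the automaton by having it first determine — again in time dominated by $\RT_\ACN(n,m)$ — whether the input size is one of the exceptional ones and, if so, switch to the hard-coded table. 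Combining the cofinite-case automaton with the finite-case table yields a single 2DCA on $\ACN$ recognizing $L$ within the time bound of Theorem~\ref{theo:main}.

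The only delicate point is the bookkeeping in the final merging step: the automaton must decide, within the allotted accelerated time, which regime it is in, without knowing $n$ and $m$ in advance and without extra synchronization. This is routine — detecting the borders of the rectangle and comparing the dimensions against the fixed finite list of exceptional values can be done while the input is being read, i.e. within $\RT_\ACN(n,m)+O(1)$ steps, which is well under the target — but it is the one place where a little care is needed, and it is exactly the kind of "run several tasks without resynchronizing" bookkeeping that Section~\ref{sec:transition} is designed to streamline. Everything else is the elementary observation that a linearly growing time budget swallows any fixed additive constant past a finite threshold.
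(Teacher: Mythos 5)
Your proposal is correct and follows essentially the same route as the paper: shrink $\epsilon$ to some $\epsilon_1<\epsilon$, observe that $\RT_\ACN(n,m)\to\infty$ so the slack $(\epsilon-\epsilon_1)\RT_\ACN(n,m)$ eventually absorbs the additive constant, and hard-code the finitely many remaining input sizes, which can be handled in real time. Your extra justification that $|\ACN^t|=O(t^2)$ forces $\RT_\ACN\to\infty$, and your remarks on merging the two regimes, are sound elaborations of what the paper leaves implicit.
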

\begin{proof}
	Consider that we have this weaker result. To get rid of the $O(1)$ term simply choose $\epsilon' < \epsilon$. For any $C>0$, for $(n,m)$ large enough we have
	\[
		(1+\epsilon)\RT_\ACN(n, m) + \epsilon f(n, m) > (1+\epsilon')\RT_\ACN(n, m) + \epsilon' f( n,m) + C
	\]

	The automaton can handle all the finitely many inputs of small size in real time.
\end{proof}

\begin{claim}
	\label{cla:convex}

	It is sufficient to prove Theorem~\ref{theo:main} for all complete convex neighborhoods.
\end{claim}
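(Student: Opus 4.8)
The plan is to pass from $\ACN$ to its convex completion $\ACN' = \CH(\ACN)\cap\ZZ^2$, which is a complete convex neighborhood containing $\ACN$, and to show that $\ACN$- and $\ACN'$-automata are interchangeable at the cost of an additive constant, both in terms of neighborhood powers and in terms of running time; Claim~\ref{cla:constant} then absorbs that constant. Throughout we use that $0\in\ACN$, which we may assume since it ensures $\ACN^t\subseteq\ACN^{t+1}$ for all $t$ and hence that $\RT_\ACN$ is well-defined.

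The first step is a geometric lemma: there is a constant $k_0$ with $\ACN^t\subseteq(\ACN')^t\subseteq\ACN^{t+k_0}$ for all $t\in\NN$. The left inclusion is immediate from $\ACN\subseteq\ACN'$. For the right one, the Remark gives that $(\ACN')^t$ is convex, hence $(\ACN')^t = t\,\CH(\ACN)\cap\ZZ^2$; I would then write a lattice point $p$ of $t\,\CH(\ACN)$ as $p=\sum_i\mu_i v_i$ with $\mu_i\ge 0$, $\sum_i\mu_i=t$ and $v_1,\dots,v_r\in\ACN$ the vertices of $\CH(\ACN)$, and split each $\mu_i$ into its integer and fractional parts. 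This exhibits $p$ as a sum of at most $t$ vectors of $\ACN$ plus a lattice point $q$ of norm at most $\sum_i\|v_i\|$, hence lying in a fixed finite set $F$; since $\ACN$ is complete, $F\subseteq\ACN^{k_0}$ for a common $k_0$, so $p\in\ACN^t+\ACN^{k_0}=\ACN^{t+k_0}$. A direct consequence is $\RT_{\ACN'}(n,m)\le\RT_\ACN(n,m)\le\RT_{\ACN'}(n,m)+k_0$.

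The second step is a simulation lemma: (i) every $\ACN$-CA is literally an $\ACN'$-CA recognizing the same language in the same time, its transition function simply ignoring the neighbors in $\ACN'\setminus\ACN$; and (ii) since $\ACN'\subseteq\ACN^{k_0}$, every $\ACN'$-CA can be simulated by an $\ACN$-CA with only an additive-constant time overhead. For (ii) I would use the standard broadcasting device: each cell of the $\ACN$-CA keeps a bounded window (of length $O(k_0)$) of its recently computed simulated states and forwards it to its $\ACN$-neighbors at every step, so that after $O(k_0)$ of its own steps a cell has collected the simulated states of all the cells of $\ACN'$ it needs to perform one $\ACN'$-transition; its ``simulated clock'' then trails the real clock by a fixed constant.

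Finally I would assemble the reduction. Given an $\ACN$-CA recognizing $L$ in time $\RT_\ACN+f$, regard it as an $\ACN'$-CA; by the $\RT$ inequality it recognizes $L$ in time $\RT_{\ACN'}+f'$ where $f'=f+(\RT_\ACN-\RT_{\ACN'})$ is an honest function $\NN^2\to\NN$ with $f\le f'\le f+k_0$. Applying the (assumed) theorem for complete convex neighborhoods to $\ACN'$ with the same $\epsilon$ yields an $\ACN'$-CA recognizing $L$ in time $\lceil(1+\epsilon)\RT_{\ACN'}+\epsilon f'\rceil$, and the simulation lemma turns this into an $\ACN$-CA in time $\lceil(1+\epsilon)\RT_{\ACN'}+\epsilon f'\rceil+O(1)$. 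Since $\epsilon f'=\epsilon f+\epsilon(\RT_\ACN-\RT_{\ACN'})\le\epsilon f+(1+\epsilon)(\RT_\ACN-\RT_{\ACN'})$, we get $(1+\epsilon)\RT_{\ACN'}+\epsilon f'\le(1+\epsilon)\RT_\ACN+\epsilon f$, so $L$ is recognized on $\ACN$ in time $(1+\epsilon)\RT_\ACN+\epsilon f+O(1)$, which suffices by Claim~\ref{cla:constant}. I expect the main obstacle to be part (ii) of the simulation lemma — organizing the broadcast buffers so that the overhead is genuinely additive and the simulation is correct from time $0$ on (including on the quiescent background); the geometric lemma and the final $\epsilon$-bookkeeping are routine.
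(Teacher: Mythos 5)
Your overall route is the same as the paper's: replace $\ACN$ by the convex neighborhood $\ACN'=\CH(\ACN)\cap\ZZ^2$ with the same hull, note that $\RT_\ACN$ and $\RT_{\ACN'}$ differ by a constant, transfer recognizers between the two neighborhoods with constant time loss, and absorb all constants with Claim~\ref{cla:constant}. But the paper does not prove the transfer step: it invokes the cited result~\cite{delacourt07}, which is precisely a (nontrivial) theorem about neighborhoods sharing a convex hull. That is where your argument has a genuine gap, namely part~(ii) of your simulation lemma. The broadcast-buffer protocol you sketch cannot give an additive-constant overhead. If every cell's simulated clock trails real time by a fixed lag $d$, then to produce its simulated state at simulated time $s+1$ a cell $c$ needs the time-$s$ states of the cells $c+w$ for all $w\in\ACN'$; under the fixed-lag hypothesis, for $w\in\ACN'\setminus\ACN$ that state first exists anywhere only at cell $c+w$ at real time $s+d$, and since $w\notin\ACN$ it cannot reach $c$ by real time $s+d+1$ --- so a uniform constant lag is impossible however large the buffers are. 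The variant actually described (``after $O(k_0)$ of its own steps a cell has collected \dots to perform one $\ACN'$-transition'') is a re-gathering scheme and yields a multiplicative slowdown by a factor $O(k_0)$, i.e.\ time $O(k_0)\cdot T$ rather than $T+O(1)$, which destroys the acceleration. To get a constant-lag simulation one needs each cell to maintain the simulated states of a whole finite block $c+D$ where $D$ satisfies $D+\ACN'\subseteq D+\ACN$ (an analogue of Lemma~\ref{lem:alpha_k}); proving that such a $D$ exists for every complete $\ACN$, or otherwise obtaining the constant-overhead equivalence, is exactly the content outsourced to~\cite{delacourt07}, and your sketch does not supply it. You correctly identified this as the main obstacle, but the proposed device does not overcome it.

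A secondary problem is the opening assumption ``we may assume $0\in\ACN$''. It is not innocuous: your geometric lemma $(\ACN')^t\subseteq\ACN^{t+k_0}$ is false without it (take $\ACN=\{(\pm1,0),(0,\pm1)\}$, which is complete in the paper's sense, yet $\ACN^t$ only contains points whose coordinate sum has the parity of $t$, so no constant $k_0$ works), and justifying the assumption amounts to simulating an $(\ACN\cup\{0\})$-CA by an $\ACN$-CA with constant loss --- again an instance of the very equivalence you are trying to establish, since the two neighborhoods have the same convex hull. The clean repair is the paper's: state both facts (the $O(1)$ difference of real-time functions and the mutual constant-loss simulation) as consequences of~\cite{delacourt07}, or else prove the $D$-set property above; your $\epsilon$-bookkeeping at the end is fine and would then go through.
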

\begin{proof}
	Consider a complete neighborhood $\ACN$ and let $\ACN'$ be the convex neighborhood having same convex hull as $\ACN$. The real time functions $\RT_{\ACN}$ and $\RT_{\ACN'}$ differ by at most a constant. Moreover a CA on $\ACN$ can simulate the behavior of a CA on $\ACN'$ with a loss of at most a constant number of steps and conversely (see \cite{delacourt07} for more details).

	The property from the theorem therefore translates directly from one neighborhood to the other with at most a constant difference that can be ignored according to Claim~\ref{cla:constant}.
\end{proof}
From now on we will consider that $\ACN$ is a convex neighborhood.

\section{Compression of the Input}
\label{sec:compression}

The first phase of the construction is to compress the input by a factor $k>\frac{1}{\epsilon}$. We want to move the states of the initial configuration towards the origin, packing them in groups of $k\times k$ as illustrated by Figure~\ref{fig:compression_example}.

Although such compressions are relatively simple to perform on the von Neumann and Moore neighborhoods, on a more general neighborhood it is not possible to know in which direction the information should travel to move towards the origin at optimal speed. In general, the optimal travel direction depends on the proportion $\frac{n}{m}$ of the input.

We first show that if the proportion of the input is fixed, compression can be done in optimal time on any complete neighborhood. Then, by performing a finite number of compressions in parallel, each assuming a different proportion, we show that any input is close enough to one of these assumed proportions to be compressed in ``nearly optimal'' time, which will be sufficient for the proof of Theorem~\ref{theo:main}.

\subsection{Compression of an Input of Constrained Proportion}

If the size of the input is known to be of proportion $\frac n m = \alpha$ for some fixed rational $\alpha$, we can perform a compression by a factor $k$ with a neighborhood such as the one illustrated on the top left of Figure~\ref{fig:compression_rules} with $\frac{x}{y} = \alpha$. On such a neighborhood, compressing the input is simply a matter of transferring the states from the top right to the bottom left, packing them in groups of $(1\times k)$, $(k\times 1)$ or $(k\times k)$ when they cannot go any further in one or both directions (see Figures~\ref{fig:compression_rules} and \ref{fig:compression_example}). The compression is completed in time $\lceil\frac{k-1}{k}\RT\rceil$.

\begin{figure}[htbp]
	\centering
		\includegraphics[page=1]{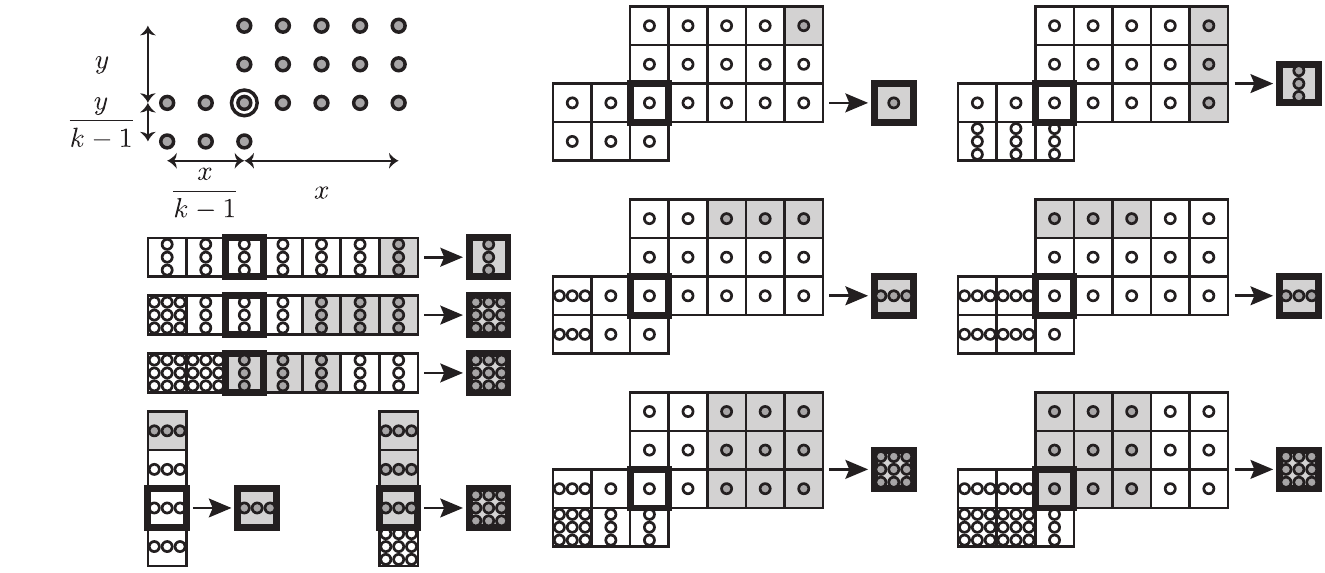}
	\caption{Rules for the compression by a factor 3 of inputs of ratio $\frac{n}{m}=2$ with the neighborhood represented in the top left. The information that the cell takes as its new state is represented in grey. Information travels towards the bottom left. By looking down and left a cell determines if the information from the top right should simply pass through (first case) or if some of its neighbors are already full in which case it should start packing information. Left column shows simplified rules for which the cell has already packed information in one of the directions and therefore only the neighbors in the remaining direction are significant.}
	\label{fig:compression_rules}
\end{figure}

\begin{figure}[htbp]
	\centering
		\includegraphics[page=2]{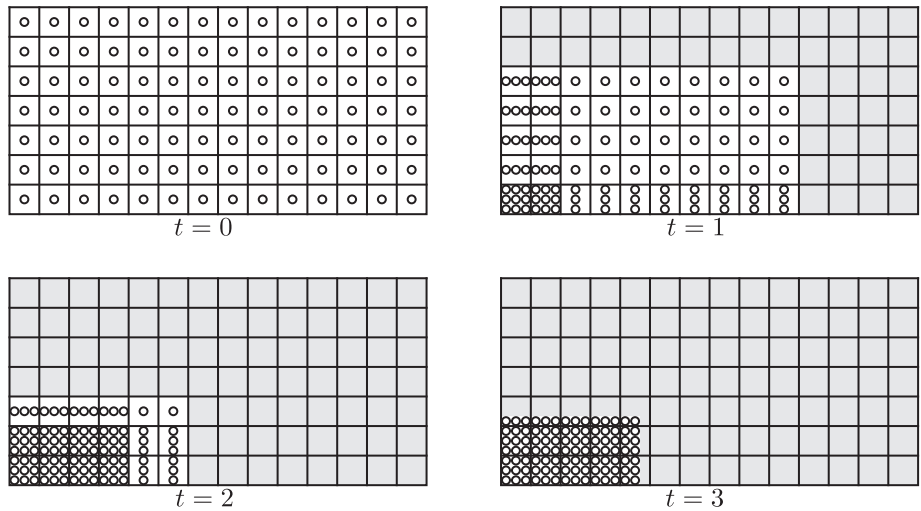}
	\caption{Compression of an input of size $(14\times 7)$ with the neighborhood and rules from Figure~\ref{fig:compression_rules}.}
	\label{fig:compression_example}
\end{figure}

\begin{figure}[htbp]
	\begin{minipage}{0.48\textwidth}
		\centering
			\includegraphics[page=2]{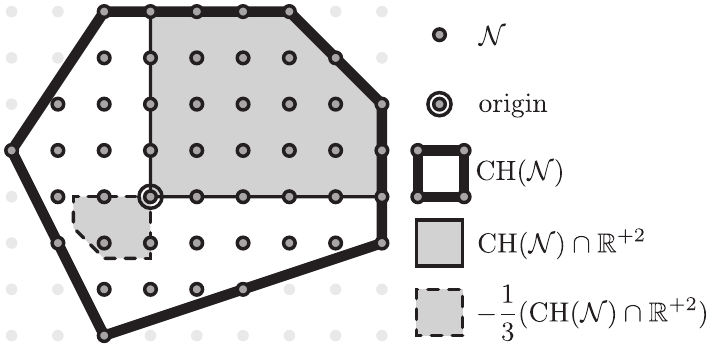}
		\caption{An example neighborhood on which a compression by a factor $k=4$ is possible.}
		\label{fig:homothetic}
	\end{minipage}
	\hfill
	\begin{minipage}{0.48\textwidth}
		\centering
			\includegraphics[page=3]{figures/homothetic.pdf}
		\caption{The area of the neighborhood that will be used for the compression of inputs of proportion $\frac{n}{m}=2$ by a factor $k=4$.}
		\label{fig:constrained_neighborhood}
	\end{minipage}
\end{figure}

Note that in order to compress by a factor $k$, the cell must be able to see $\frac{1}{k-1}$ times as far towards the left and bottom as it sees towards the right and top in order to properly determine when it should start packing states. Because the convex hull of a complete neighborhood $\ACN$ contains an open set around the origin (otherwise $(\ACN^p)_{p\in\NN}$ would not expand in all directions), it contains its homothetic image of ratio $-\frac{1}{k_0-1}$ for some $k_0$ (see Figure~\ref{fig:homothetic}). On such a neighborhood compression by any factor $k \geq k_0$ is possible.

To compress inputs of proportion $\frac n m = \alpha$ on some complete neighborhood $\ACN$, we consider the largest rectangle $[0,x]\times[0,y]$ with $x, y\in\QQ$ and $\frac{x}{y}=\alpha$ included in the convex hull of $\ACN$ (see Figure~\ref{fig:constrained_neighborhood}). For all $k\geq k_0$, the rectangle $[-\frac{x}{k-1}, 0]\times [-\frac{y}{k-1}, 0]$ is also in the convex hull of $\ACN$.

These rectangles have rational but not necessarily integer dimensions. If we consider the neighborhood $\ACN^p$ for some large $p$, all is scaled up by a factor $p$ and the corresponding rectangles can be made of integer dimensions. The real time function on inputs of proportion $\alpha$ for $\ACN^p$ is equal to the real time function of the neighborhood containing only the two rectangles (of integer coordinates). The compression algorithm described by Figures~\ref{fig:compression_rules} and \ref{fig:compression_example} therefore finishes in time $\frac{k-1}{k}\RT_{\ACN^p}+O(1)$ on $\ACN^p$. An automaton working on $\ACN$ can simulate one step of an automaton working on $\ACN^p$ in $p$ time steps, and since $\RT_{\ACN^p} = \lceil\frac{1}{p}\RT_{\ACN}\rceil$, the compression can be completed on $\ACN$ in time $\frac{k-1}{k}\RT_{\ACN}+O(1)$.

\subsection{Compression of General Input}
\label{ssec:compression_general}

Let us now consider inputs of arbitrary proportions. As discussed in the previous subsection, for inputs of proportion $\alpha$ the optimal direction in which the information should travel for a compression is defined by the diagonal of the largest rectangle $[0,x]\times[0,y]$ with $\frac x y = \alpha$ included in the convex hull of $\ACN$. The first thing to note is that since $\ACN$ is complete, there exists $\eta>0$ such that $[0,\eta]\times [0,\eta]\subseteq \CH(\ACN)$ and hence all maximal rectangles in $\CH(\ACN)$ have at least one dimension greater than $\eta$. The corners $(x,y)$ of such maximal rectangles all lie on a line. Let us pick a finite set $S$ of rational points on this line from one extremity to the other with distance at most $\epsilon\eta$ between two consecutive points (see Figure~\ref{fig:epsilon}).

For each proportion $\alpha=\frac{x}{y}$ with $(x, y)\in S$, $\ACA'$ performs a compression of the input as described in the previous subsection. All compressions take place at the same time in parallel. Note that even if the proportion of the input is not exactly that for which the compression is optimized, the input is still compressed properly although not as quickly.

Let us prove that one of the compressions that are run by the automaton compresses the input in time at most $(\frac{k-1}{k}+\epsilon)\RT$. A compression along the vector corresponding exactly to the proportion of the input would take a time $\frac{k-1}{k}\RT$. A compression along one of the vectors in $S$ that is closest to the optimal vector (at distance at most $\epsilon\eta$) puts all states from the input within a distance at most $\epsilon\eta\RT$ from their destination in time $\frac{k-1}{k}\RT+O(1)$. By choosing the closest vector properly amongst the two choices, the remaining distance can be travelled in time at most $\epsilon\RT+O(1)$ as illustrated by Figure~\ref{fig:epsilon_travel} (information travels at speed at least $\eta$ in one of the dimensions).

For any possible input, at least one of the compressions completes in time at most $(\frac{k-1}{k}+\epsilon)\RT+O(1)$.

\begin{figure}[htbp]
	\begin{minipage}{.48\textwidth}
		\centering
			\includegraphics[page=1]{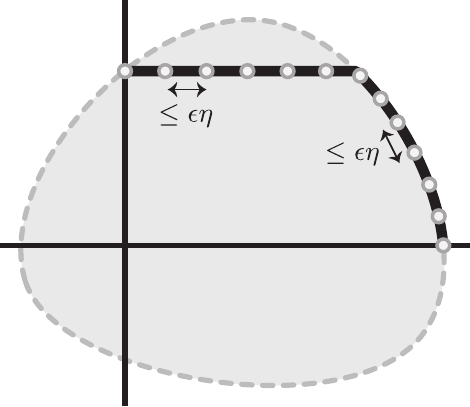}
		\caption{Choosing the set $S$ of proportions that will be used by $\ACA'$ for compressions. The thick line is the set of corners of maximal rectangles for all possible proportions. Along this line, we pick a finite set of points at intervals of at most $\epsilon\eta$.}
		\label{fig:epsilon}
	\end{minipage}
	\hfill
	\begin{minipage}{.48\textwidth}
		\centering
			\includegraphics[page=2]{figures/epsilon.pdf}
		\caption{Compression of an input of arbitrary proportion. The optimal direction for the compression is represented by a dashed line. The closest chosen direction is represented by a solid line inside the neighborhood. It is at a distance at most $\epsilon\eta$ of the optimal vector. The travel path of the farthest cell of the input is represented as a solid black line, made of an initial segment along the almost-optimal direction and an extra segment to compensate for the deviation.}
		\label{fig:epsilon_travel}
	\end{minipage}
\end{figure}

\section{Transition}
\label{sec:transition}

After the input has been compressed, the automaton $\ACA'$ should immediately start simulating the behavior of $\ACA$, $k$ steps at a time. However the cells of $\ACA'$ receive the compressed input at different times. If we wanted all the cells to start the next phase at the same time, we would require some synchronization scheme such as a firing-squad synchronization algorithm but this would take a linear time. Instead, we show that synchronization is not required to start the accelerated simulation as each cell of the automaton proceeds with the next phase as soon as the relevant information is available.

This technique is very general and can be used in numerous situations where a cellular automaton performs a computation by executing a series of separate tasks one after the other without having to spend time synchronizing all cells. In its general form, it can be stated in the following way:

\begin{proposition}[Passive Synchronization]
	\label{prop:synchronization}
	Given a CA $\ACA$ of any dimension working on a complete neighborhood $\ACN$, there exists a CA $\ACA'$ working on the same neighborhood $\ACN$ that can simulate the behavior of $\ACA$ on any input even if the configuration is given asynchronously in such a way that each cell of $\ACA'$ computes states of the simulated automaton at least as fast as if the computation had started synchronously when the last cell receives its input.

	Formally, if we denote by $\ACQ$ the states of $\ACA$, $\ACA'$ has states $\{\bot\}\cup(\ACQ \times \ACQ')$ where $\bot$ is a permanent state (cannot be changed by the transition rule of the automaton) and $\ACQ'$ is a set of extra working states containing a default state $\nu$. The cells of $\ACA'$ are initially in state $\bot$ and considered \emph{inactive}. Before each transition of the automaton, any number of inactive cells of $\ACA'$ might be \emph{activated} by some external action over which $\ACA'$ has no control. Activating a cell $c$ changes its state to $(\ACC(c), \nu)$ where $\ACC$ is the input of the simulated automaton $\ACA$.

	If there exists a time $t_0$ at which all cells have been activated then for any cell $c$ the projection on $\ACQ$ of the state of $c$ in $\ACA'$ at time $(t_0+t)$ is the state of $c$ in the evolution of $\ACA$ from the configuration $\ACC$ at time $t'$ for some $t'\geq t$.

\end{proposition}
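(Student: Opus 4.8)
The plan is to build $\ACA'$ so that each cell, once activated, greedily computes as many states of $\ACA$'s evolution as the information currently available to it allows, never waiting for a global signal. The key invariant to maintain is that at every time step each active cell holds a pair $(q, \ACQ')$ where $q = \ACA^{s}(\ACC)(c)$ for some simulation depth $s = s(c,t)$ that is recorded (or recoverable) in the working component, and where $s(c,t)$ is nondecreasing in $t$. The transition rule of $\ACA'$ does the following at a cell $c$: if $c$ is inactive it stays $\bot$; if $c$ has just become active it records depth $0$; otherwise $c$ looks at all its neighbors $c+n$ for $n\in\ACN$, and if every one of them is active with recorded depth at least $s(c,t)$, then $c$ applies $\delta$ to the corresponding states to obtain $\ACA^{s(c,t)+1}(\ACC)(c)$ and sets $s(c,t+1) = s(c,t)+1$; if some neighbor is still inactive or lags behind, $c$ simply keeps its current state and depth. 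One must check this rule is well-defined: a cell that applies $\delta$ at depth $s$ needs each neighbor to be at depth $\geq s$, and after the step the cell is at depth $s+1$; the states fed to $\delta$ are exactly the depth-$s$ states of the neighbors, which by the invariant equal $\ACA^{s}(\ACC)(c+n)$, so the output is genuinely $\ACA^{s+1}(\ACC)(c)$. The component $\ACQ'$ only needs to encode the depth \emph{difference} between a cell and its slowest neighbor, which is bounded, so $\ACQ'$ is finite; strictly, it suffices to record a few bits saying how the cell's depth compares to each neighbor's, updated locally. Because $\bot$ is permanent, inactive cells never interfere, matching the hypothesis that activation is an uncontrolled external event.

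With the rule in place, the proof reduces to a lower bound on $s(c, t_0 + t)$. Define $t_0$ as a time by which all cells are active, and let $T(c,t)$ denote the simulation depth of cell $c$ at global time $t$ under $\ACA'$. The claim to prove is $T(c, t_0+t) \geq t$ for all $c$ and all $t \geq 0$. I would prove this by induction on $t$. For $t=0$ it is immediate since depths are nonnegative. For the inductive step, suppose $T(c', t_0+t) \geq t$ for every cell $c'$. Fix $c$; among its neighbors $c+n$, each satisfies $T(c+n, t_0+t) \geq t$ by the inductive hypothesis, so at time $t_0+t$ the cell $c$ itself has depth $\geq \min_n T(c+n,t_0+t)$ minus a bounded constant — here one must be slightly careful because $c$'s own depth could already exceed its neighbors' — but in any case, at time $t_0+t+1$ the rule guarantees that if $T(c, t_0+t) \leq \min_n T(c+n, t_0+t)$ then $c$ advances, giving $T(c,t_0+t+1) \geq T(c,t_0+t)+1$, and if $T(c,t_0+t)$ already exceeds that minimum then $T(c,t_0+t) \geq t$ already and we are fine. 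Either way $T(c, t_0+t+1) \geq t+1$. This is the heart of the argument; it formalizes the intuition that the ``simulation front'' advances by at least one layer per real step once every cell is active, because $\ACN$-locality means a cell is held back only by genuinely slower neighbors, and the slowest cell is never slower than a fresh synchronous start at time $t_0$.

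The remaining point is the precise phrasing of the conclusion: the state of $c$ in $\ACA'$ at time $t_0 + t$, projected onto $\ACQ$, equals $\ACA^{t'}(\ACC)(c)$ for $t' = T(c, t_0+t) \geq t$, which is exactly the statement. I would also note for completeness that since activation can only help (more active neighbors never decrease a cell's ability to advance), the construction is robust, and that the finiteness of $\ACQ'$ follows from the fact that the depth gap between neighboring cells stays within a fixed bound determined by $\ACN$ — concretely, once all neighbors of $c$ are active, $c$'s depth can trail any neighbor's by at most the diameter-like constant coming from how long it takes activeness to propagate, but since we only measure time from $t_0$ onward this gap is immaterial to the bound and a constant number of extra states suffices to bookkeep it.

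The main obstacle I anticipate is the bookkeeping making $\ACQ'$ finite while keeping the rule purely local: a cell cannot store an unbounded depth counter, so it must instead track only relative offsets to its neighbors and the rule must be designed so these offsets stay bounded. Getting this encoding right — and verifying that ``advance when all neighbors are at least as deep'' is expressible using only those bounded offsets, and that the offsets update consistently — is the delicate part; the induction on $t$ for the timing bound is then routine.
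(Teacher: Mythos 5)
Your construction and your timing induction coincide with the paper's (greedy local advancement: a cell at simulated depth $s$ advances to $s+1$ as soon as every neighbor is active and at depth at least $s$; then induction gives $T(c,t_0+t)\geq t$). However, the one genuinely non-trivial point of the proposition --- that the bookkeeping can be done with \emph{finitely many} states --- is precisely the point you defer as ``the delicate part'' without resolving, and the justification you sketch for it is not the right one. The bound on the depth gap between neighboring cells has nothing to do with ``how long it takes activeness to propagate'': activation is external and adversarial, and the state set must be finite at all times, not only after $t_0$. The correct argument --- and the only place where completeness of $\ACN$ is actually used --- is the following. Since $\ACN$ is complete, there is a $\tau$ with $-\ACN\subseteq\ACN^\tau$. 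Unfolding your rule $\tau$ times shows that for $c$ to reach depth $s+\tau$, all cells of $c+\ACN^\tau$ must have reached depth $s$; since $c-\ACN\subseteq c+\ACN^\tau$, no cell can be more than $\tau$ depths ahead of any cell that has it in its neighborhood, so for every $n\in\ACN$ the difference $T(c+n)-T(c)$ lies in $[-1,\tau]$ and depths stored modulo $2\tau+1$ can be compared locally. Without completeness (e.g.\ a one-way neighborhood) the gap really is unbounded, so this step cannot be waved away.

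A second, related omission: your rule feeds $\delta$ ``the corresponding states'' of the neighbors, but a neighbor at depth $s'>s$ no longer carries its depth-$s$ state as its current simulated state. Each cell must therefore also remember its last $\tau$ simulated states (finitely many, again by the gap bound) so that lagging cells can still read the past layers they need; as written, your transition rule is not well-defined when a neighbor is strictly ahead. With these two points supplied --- the completeness-based bound $\tau$ and the length-$\tau$ history --- your argument becomes the paper's proof.
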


\begin{proof}
	The idea is to make all cells of $\ACA'$ compute one step of $\ACA$ whenever they have enough information to do so, while remembering their past states that other cells might need at a later time.

	When a cell $c$ is activated, it receives the initial state $\ACC(c)$ and we say that its \emph{simulated time} is $0$. From that point on, it looks at its neighbors and waits for all of them to be activated. When this happens, it sees all initial states in its neighborhood and can compute the next state in the evolution of $\ACA$, increasing its simulated time to 1. As time passes it keeps watching its neighbors until all of them are also at a simulated time at least equal to its own, which means that it has all the information necessary to compute the next step and increase its simulated time further.

	Let us prove that this process can be carried out with finitely many states. First, notice that since $\ACN$ is complete, there exists $\tau \in \NN$ such that $-\ACN\subseteq \ACN^\tau$. In order to compute its state for a simulated time $(\tau+t)$ a cell $c$ needs to have had access to the state at the simulated time $t$ of all cells in $(c+\ACN^\tau)$ which includes the set $(c-\ACN)$ of cells that have $c$ in their neighborhood. This means that a cell cannot be more than $\tau$ steps ahead in its simulation than the cells that have it in their neighborhood, which implies that the difference of simulated times between two neighbor cells is at most $\tau$. If each cell stores the value of its simulated time modulo $(2\tau+1)$, it is possible for a cell to know the relative difference in simulated time with all cells in its neighborhood. Furthermore, it is sufficient that a cell remembers its last $\tau$ simulated states to be sure that when a cell $c$ at simulated time $t$ looks at its neighbors that are more advanced in the simulation it can see their simulated state at time $t$.

	Finally, we prove by induction that at time $(t_0+t)$ all cells have a simulated time at least $t$. This is obviously true at time $t_0$ since all cells have been activated. By induction, at time $(t_0+t)$ for any cell at simulated time $t$ all its neighbors are at least at simulated time $t$ so it can compute a new step of the simulation, which proves that at time $(t_0+t+1)$ all cells have a simulated time of at least $(t+1)$.

	Note that this process is such that the cells who are behind in their simulation can compute new states without delay, whereas the ones ahead wait for their neighbors to catch up.
\end{proof}

In the following section we describe how $\ACA'$ can simulate $k$ steps of $\ACA$ at a time starting from a compressed input. We assume that all cells complete the compression and start the simulation synchronously when the last cell receives its compressed information. Using Proposition~\ref{prop:synchronization}, we can connect the two constructions (cells are \emph{activated} for the simulation when they receive their compressed input) and ensure that the origin is always at least as advanced in its computation as if the simulation had started synchronously.

\section{Simulation on a Compressed Input}
\label{sec:simulation}

Let us denote by $\rho$ the function that maps a cell of $\ACA'$ to the set of cells of $\ACA$ whose states it receives after the compression of a factor $k$~:
\[
	\forall c\in \ZZ^2,\quad \rho(c) = \{kc + (x, y) \mid x, y \in \llbracket 0, k-1\rrbracket\}
\]
and extend the notation to sets of cells by $\rho(S) = \bigcup\limits_{c\in S}\rho(c)$.

The states of $\ACA$ that are held in the neighborhood of a cell $c$ in $\ACA'$ are the ones corresponding to the cells of $\ACA$ in $\rho(c + \ACN)$. To be able to compute $k$ steps of the original automaton, the cell $c$ in $\ACA'$ needs to be able to see in its neighborhood the states corresponding to the cells $(\rho(c) + \ACN^k)$ of $\ACA$. Although this is the case for simple rectangular neighborhoods, it is not true for some neighborhoods (see Figure~\ref{fig:vonNeumannCE}).

\begin{figure}[htbp]
	\begin{minipage}{.48\textwidth}
		\centering
			\includegraphics[page=1]{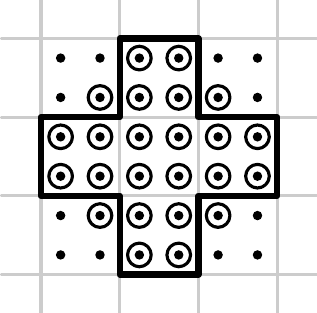}
		\caption{Compression of factor $k=2$ with the von Neumann neighborhood. The thick line represents $\rho(c+\ACN)$, the states that are visible to the central cell in one step. The circles represent $\rho(c)+\ACN^k$, the states that should be known to compute two steps of the original automaton on the states that the central cell holds.}
		\label{fig:vonNeumannCE}
	\end{minipage}
	\hfill
	\begin{minipage}{0.48\textwidth}
		\centering
			\includegraphics[page=2]{figures/vonNeumannCE.pdf}
		\caption{The thick line represents $\rho(c+\ACN)$. The circles represent the cells in $(c'+k\ACN)$ for some $c'$ in $\rho(c)$ (the one in the top left).}
		\label{fig:sim_kN}
	\end{minipage}
\end{figure}

What is true however is that $\rho(c) + k\ACN \subseteq \rho(c + \ACN)$ since for any $v\in\ACN$ if the state of a cell $c'$ in $\ACA$ is held by a cell $c$ in $\ACA'$ after compression of the input, the cell $(c+v)$ in $\ACA'$ holds the state of $(c'+kv)$ in $\ACA$ (see Figure~\ref{fig:sim_kN}).

\begin{lemma}
	\label{lem:alpha_k}
	$\forall k\in\NN, \exists \alpha\in\NN,\quad \ACN^\alpha + k\ACN = \ACN^\alpha + \ACN^k$
\end{lemma}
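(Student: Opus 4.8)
The plan is to reduce the asserted equality to a purely geometric statement about the polygon $\CH(\ACN)$, and then settle that statement by a short barycentric argument; it will turn out that one may take $\alpha = 2k$ (the case $k=0$ being trivial).

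First I would dispose of the easy inclusion: since for $x\in\ACN$ we have $kx = x+\dots+x \in \ACN^k$, we get $k\ACN\subseteq\ACN^k$, hence $\ACN^\alpha+k\ACN\subseteq\ACN^\alpha+\ACN^k$ for \emph{every} $\alpha$. Only the reverse inclusion needs an argument. For this I would pass to convex hulls. Write $C=\CH(\ACN)$; as $\ACN$ is finite, $C$ is a convex polygon, and since $\ACN$ is complete (hence $C$ is two–dimensional) every vertex of $C$ lies in $\ACN$. Recall that Minkowski sums commute with convex hull ($\CH(A+B)=\CH(A)+\CH(B)$, $\CH(pA)=p\,C$ for $A=\ACN$), and that $\ACN^p$ is convex for all $p$ by the remark above, so $\ACN^p = pC\cap\ZZ^2$. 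Consequently any $y\in\ACN^\alpha+\ACN^k$ satisfies $y\in\ZZ^2$ and $y\in\alpha C + kC = (\alpha+k)C$. It therefore suffices to show the following \emph{hull} statement: for every $y\in(\alpha+k)C$ there is a vertex $v$ of $C$ with $\tfrac{y-kv}{\alpha}\in C$. Granting this, $y-kv\in\alpha C\cap\ZZ^2=\ACN^\alpha$ while $v\in\ACN$, so $y=(y-kv)+kv\in\ACN^\alpha+k\ACN$, which is the inclusion we want.

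To prove the hull statement I would fix a triangulation of the polygon $C$ whose vertices are vertices of $C$ (a fan triangulation from one vertex works). The point $\tfrac{y}{\alpha+k}\in C$ lies in one of these triangles, so it is a convex combination $\sum_i\mu_i V_i$ of at most three vertices $V_i$ of $C$ with $\mu_i\ge 0$ and $\sum_i\mu_i=1$; let $v=V_{i_0}$ be a vertex achieving the largest coefficient, so $\mu_{i_0}\ge\tfrac13$. Since $y=(\alpha+k)\sum_i\mu_iV_i$, one computes
\[
	\frac{y-kv}{\alpha} \;=\; \sum_{i\neq i_0}\frac{(\alpha+k)\mu_i}{\alpha}\,V_i \;+\; \frac{(\alpha+k)\mu_{i_0}-k}{\alpha}\,V_{i_0},
\]
and the coefficients on the right sum to $\tfrac{(\alpha+k)-k}{\alpha}=1$; they are all nonnegative as soon as $(\alpha+k)\mu_{i_0}\ge k$, which holds because $\mu_{i_0}\ge\tfrac13$ and $\alpha\ge 2k$. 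Hence $\tfrac{y-kv}{\alpha}$ is a convex combination of vertices of $C$, so it lies in $C$, finishing the proof.

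The step I expect to be the real obstacle is exactly the boundary behaviour hidden in the hull statement: a naive ``perturb $y$ by a bounded amount'' argument fails because $\tfrac{y}{\alpha+k}$ may sit on (or very near) the boundary of $C$, where small displacements escape $C$, and this would otherwise force an awkward case analysis according to the position of $y$ relative to the faces and vertices of $C$. Choosing $v$ to be the vertex carrying the largest barycentric weight is precisely what makes the negative correction $-\tfrac{k}{\alpha}v$ get absorbed by the remaining positive weights uniformly, so no such case analysis is needed.
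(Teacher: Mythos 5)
Your proof is correct, but it follows a genuinely different route from the paper's. The paper's own argument is purely combinatorial: choose $\alpha$ with $\alpha + k > |\ACN|(k-1)$; any element of $\ACN^\alpha + \ACN^k = \ACN^{\alpha+k}$ is a sum of $\alpha+k$ elements of $\ACN$, so by pigeonhole some element $v$ appears at least $k$ times, giving directly the decomposition $kv$ plus a sum of $\alpha$ elements of $\ACN$. That argument is two lines, needs neither convexity nor completeness of $\ACN$, but yields an $\alpha$ growing with $|\ACN|$. You instead pass to the convex hull $C=\CH(\ACN)$, reduce the lemma to the geometric statement that every $y\in(\alpha+k)C$ admits a vertex $v$ of $C$ with $\frac{y-kv}{\alpha}\in C$, and settle it with the largest-barycentric-weight trick; this buys the uniform and much smaller value $\alpha = 2k$, independent of $|\ACN|$ (and your observation that the naive perturbation argument fails near the boundary, and that picking the heaviest vertex avoids the case analysis, is exactly right). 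The price is that your proof leans on the standing assumptions in force at this point of the paper: $\ACN$ convex (Claim~\ref{cla:convex}) and the unproved Remark that $\ACN^p$ is convex, which together give the identity $\ACN^p = pC\cap\ZZ^2$ you use to conclude $y-kv\in\ACN^\alpha$. That Remark is a genuine (two-dimensional) lattice-polygon fact, stated without proof in the paper and false in higher dimensions, so your argument is less self-contained, is specific to the convex case, and would not transfer verbatim to a non-convex or higher-dimensional neighborhood, whereas the paper's pigeonhole applies to any finite $\ACN$ as stated. Both proofs are valid in context; yours trades elementarity and generality for a better constant.
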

\begin{proof}
	The inclusion $\ACN^\alpha + k\ACN \subseteq \ACN^\alpha + \ACN^k$ is obvious for any $\alpha$. As for the converse, choose $\alpha$ such that $\alpha + k > |\ACN|(k-1)$. Any $x\in\ACN^\alpha + \ACN^k$ can be written as the sum of $(\alpha + k)$ elements of $\ACN$ and therefore at least one of these elements appears at least $k$ times, which proves that $x\in\ACN^\alpha + k\ACN$.
\end{proof}

By Lemma~\ref{lem:alpha_k}, we can choose $\alpha$ such that $\ACN^\alpha + k\ACN = \ACN^\alpha + \ACN^k$. We now modify the behavior of $\ACA'$ so that during the first $\alpha$ steps of the computation, before starting the compression, all cells gather the initial states contained in their $\ACN^\alpha$ neighborhood. From here onwards, each cell performs all of the computation as described earlier on all the states it holds~: a cell $c$ holds at time $(\alpha+t)$ the states that would have been on all the cells in $(c+\ACN^\alpha)$ at time $t$ on the automaton $\ACA'$ as described until now. At time $(\alpha+t)$ the cells in $(c+\ACN)$ as a whole hold all the states that would have been on the cells in $(c+\ACN^{\alpha+1})$ at time $t$, which is exactly what is needed to compute the states of all cells in $(c+\ACN^\alpha)$ at time $(t+1)$. This extra step adds a constant time $\alpha$ to the computation of the automaton\footnote{The constant time $\alpha$ is actually not lost since the origin holds the states that would be on the cells in $\ACN^\alpha$, which enables it to compute its own state $\alpha$ time steps ahead. However, for the purpose of proving Theorem~\ref{theo:main}, adding a constant time to the computation is irrelevant.}.

After the initial gathering and compression of the input, the cell $c$ in $\ACA'$ holds the initial states in $\ACA$ for the cells in $(\rho(c) + \ACN^\alpha)$. Let us show by induction that this is enough to simulate the behavior of $\ACA'$ with a linear speed-up of factor $k$. Assume that at time $(t_0+t)$, any cell $c$ of $\ACA'$ holds the states at time $t$ for the cells of $\ACA$ in $(\rho(c) + \ACN^\alpha)$.

This means that the cells in the neighborhood $(c+\ACN)$ of $c$ in $\ACA'$ at time $(t_0+t)$ hold the states at time $t$ in $\ACA$ of the cells in $\rho(c+\ACN) + \ACN^\alpha$. By Lemma~\ref{lem:alpha_k}, we have
\[
	\rho(c+\ACN) + \ACN^\alpha \quad \supseteq \quad \rho(c) + k\ACN + \ACN^\alpha \quad \supseteq \quad \rho(c) + \ACN^{\alpha + k}
\]
which shows that cell $c$ in $\ACA'$ at time $(t_0+t)$ sees enough information to compute the states in $\ACA$ for the cells in $(\rho(c) + \ACN^\alpha)$ at time $(t+k)$.

\section{Total time}
\label{sec:total_time}

We have completed the description of the behavior of the automaton $\ACA'$. Let us now evaluate the total time taken to recognize the language $L$ recognized by $\ACA$ in time $\RT_\ACN + f$.

The compression of the input takes a time $(\frac{k-1}{k} + \epsilon)\RT_\ACN + O(1)$. The simulation of $\ACA$ from a fully compressed input takes a time $\frac{1}{k}(\RT_\ACN + f) + O(1)$ for some $k\geq \frac{1}{\epsilon}$, and Proposition~\ref{prop:synchronization} shows that no time is lost by completing the compression asynchronously (the time of the compression is the time at which the last cell is correctly compressed).

The total time for the simulation of $\ACA$ is therefore
\[
(\frac{k-1}{k} + \epsilon)\RT_\ACN + \frac{1}{k}(\RT_\ACN + f) + O(1) \leq (1+\epsilon)\RT_\ACN + \epsilon f + O(1)
\]

By Claim~\ref{cla:constant}, the $O(1)$ term can be eliminated, which concludes the proof of Theorem~\ref{theo:main}.

\section{Conclusion}

The linear acceleration presented in this article is slightly weaker than the previously known results on a limited class of neighborhoods (which contains the von Neumann and Moore neighborhoods). On these neighborhoods, as well as all one-dimensional complete neighborhoods, any language that can be recognized in time $(\RT + f)$ can be recognized in time $(\RT+\epsilon f)$ for any $\epsilon> 0$.

Although the difference is only significant if $f=o(\RT)$, it would be interesting to know whether this stronger statement can be proved for general two-dimensional complete neighborhoods. This would either require an optimal-time compression of the input or a completely different construction skipping the compression altogether.

As we currently understand it, optimal-time compression seems unlikely on general neighborhoods. The problem is that states from the initial configuration should move towards the origin in the optimal direction permitted by the neighborhood. Before receiving any information from the axes, a cell has no way of knowing the precise direction to the origin. If the neighborhood's convex hull has more than one vertex in the positive quarter of the plane, moving along any of the directions permitted by the neighborhood might be  sub-optimal, as oppposed to the case of the Moore neighborhood in which going diagonally at first is never sub-optimal and by the time it is necessary to change direction to go either horizontally or vertically information is received from the axes.

If only one cell needs to send its information towards the origin, the problem can be solved by spreading the information in all directions and spreading symmetric signals from the origin. It is however not possible to implement this for all cells at the same time with finitely many states.

\section*{Acknowledgments}

The authors would like to thank Jacques Mazoyer for his helpful conversations and inspiring ideas at the start of the work that led to this article.

\bibliographystyle{plainurl}
\bibliography{mybib.bib}

\begin{thebibliography}{10}

\bibitem{albert87}
J.~Albert and K.~{\v Culik II}.
\newblock A simple universal cellular automaton and its one-way and totalistic
  version.
\newblock {\em Complex Systems}, 1:1--16, 1987.

\bibitem{beyer69}
W.T. Beyer.
\newblock {\em Recognition of topological invariants by iterative arrays}.
\newblock Massachusetts Institute of Technology, Project MAC, 1969.
\newblock URL: \url{https://books.google.fr/books?id=SfkUAQAAMAAJ}.

\bibitem{cole69}
Stephen~N. Cole.
\newblock Real-time computation by $n$-dimensional iterative arrays of
  finite-state machines.
\newblock {\em IEEE Transactions on Computers}, C-18(4):349--365, 1969.

\bibitem{delacourt07}
Martin Delacourt and Victor Poupet.
\newblock Real time language recognition on 2d cellular automata: Dealing with
  non-convex neighborhoods.
\newblock In Ludek Kucera and Anton{\'{\i}}n Kucera, editors, {\em Mathematical
  Foundations of Computer Science 2007, 32nd International Symposium, {MFCS}
  2007, Cesk{\'{y}} Krumlov, Czech Republic, August 26-31, 2007, Proceedings},
  volume 4708 of {\em Lecture Notes in Computer Science}, pages 298--309.
  Springer, 2007.
\newblock URL: \url{http://dx.doi.org/10.1007/978-3-540-74456-6_28}, \href
  {http://dx.doi.org/10.1007/978-3-540-74456-6_28}
  {\path{doi:10.1007/978-3-540-74456-6_28}}.

\bibitem{fischer65}
P.~C. Fischer.
\newblock Generation of primes by one-dimensional real-time iterative array.
\newblock {\em Journal of the Assoc. Comput. Mach.}, 12:388--394, 1965.

\bibitem{hennie61}
F.C. Hennie.
\newblock {\em Iterative Arrays of Logical Circuits}.
\newblock MIT Press Classics. MIT Press, 1961.

\bibitem{mazoyer92}
Jacques Mazoyer and Nicolas Reimen.
\newblock A linear speed-up theorem for cellular automata.
\newblock {\em Theor. Comput. Sci.}, 101(1):59--98, 1992.
\newblock \href
  {http://dx.doi.org/http://dx.doi.org/10.1016/0304-3975(92)90150-E}
  {\path{doi:http://dx.doi.org/10.1016/0304-3975(92)90150-E}}.

\bibitem{roka99}
Zsuzsanna R\'oka.
\newblock Simulations between cellular automata on {C}ayley graphs.
\newblock {\em Theoretical Computer Science}, 225(1-2):81--111, 1999.

\bibitem{smithIII71}
Alvy~R. {Smith {III}}.
\newblock Simple computation-universal cellular spaces.
\newblock {\em J. ACM}, 18(3):339--353, 1971.
\newblock \href {http://dx.doi.org/http://doi.acm.org/10.1145/321650.321652}
  {\path{doi:http://doi.acm.org/10.1145/321650.321652}}.

\bibitem{smithIII71a}
Alvy~R. {Smith {III}}.
\newblock Two-dimensional formal languages and pattern recognition by cellular
  automata.
\newblock In {\em Proceedings of the 12th Annual Symposium on Switching and
  Automata Theory (Swat 1971)}, SWAT '71, pages 144--152, Washington, DC, USA,
  1971. IEEE Computer Society.
\newblock URL: \url{http://dx.doi.org/10.1109/SWAT.1971.29}, \href
  {http://dx.doi.org/10.1109/SWAT.1971.29} {\path{doi:10.1109/SWAT.1971.29}}.

\bibitem{smithIII72}
Alvy~R. {Smith {III}}.
\newblock Real-time language recognition by one-dimensional cellular automata.
\newblock {\em Journal of the Assoc. Comput. Mach.}, 6:233--253, 1972.

\bibitem{terrier99}
Véronique Terrier.
\newblock Two-dimensional cellular automata recognizer.
\newblock {\em Theor. Comput. Sci.}, 218(2):325--346, 1999.
\newblock \href
  {http://dx.doi.org/http://dx.doi.org/10.1016/S0304-3975(98)00329-6}
  {\path{doi:http://dx.doi.org/10.1016/S0304-3975(98)00329-6}}.

\bibitem{terrier04}
Véronique Terrier.
\newblock Two-dimensional cellular automata and their neighborhoods.
\newblock {\em Theor. Comput. Sci.}, 312(2-3):203--222, 2004.
\newblock \href {http://dx.doi.org/http://dx.doi.org/10.1016/j.tcs.2003.08.011}
  {\path{doi:http://dx.doi.org/10.1016/j.tcs.2003.08.011}}.

\bibitem{neumann66}
John von Neumann.
\newblock {\em Theory of Self-Reproducing Automata}.
\newblock University of Illinois Press, Urbana, IL, USA, 1966.

\end{thebibliography}

\end{document}